\title{Slightly Improved Upper Bound on the Integrality Ratio for the $s-t$ Path TSP}
\author{Xianghui Zhong\\
\small University of Bonn, Germany\\[5mm]
            }
\date{\today} 
\newtheorem{theorem}{Theorem}[section]
\newtheorem{lemma}[theorem]{Lemma}
\newtheorem{remark}[theorem]{Remark}
\theoremstyle{definition}
\newtheorem{definition}[theorem]{Definition}
\begin{document}
\maketitle
\begin{abstract}
In this paper we investigate the integrality ratio of the standard LP relaxation for the \textsc{Metric $s-t$ Path TSP}. We make a near-optimal choice for an auxiliary function used in the analysis of Traub and Vygen which leads to an improved upper bound on the integrality ratio of 1.5273.
\end{abstract}

{\small\textbf{keywords:} traveling salesman problem; metric $s-t$ path TSP; integrality ratio}

\section{Introduction}
The traveling salesman problem (\textsc{TSP}) is probably the best-known problem in discrete optimization. An instance consists of a complete graph $K_n$ with a distance function $c$ on the edges of the graph and the task is to find a shortest Hamilton cycle, i.e.\ a tour visiting every vertex exactly once. 

A variant of the \textsc{TSP} is the \textsc{$s-t$ Path TSP} where two vertices $s$ and $t$ are specified and the task is to find a shortest path starting in $s$ and ending in $t$ visiting all other vertices exactly once. In this paper we consider the \textsc{Metric $s-t$ Path TSP} as a special case of it. Here the distances satisfy the triangle inequality. By recent research it is known that this problem can be approximated within a factor of $\frac{3}{2}+\epsilon$ and $\frac{3}{2}$ \cite{traub2018approaching,zenklusen20191}. Moreover, it was shown that any $\alpha$ approximation algorithm for the standard \textsc{TSP} problem implies an $\alpha+\epsilon$ approximation algorithm in the \textsc{$s-t$ Path TSP} version \cite{TraubVZ20}.

The \emph{integrality ratio} of a linear program (LP) is the supremum of the ratio between the value of the optimal integral solution and that of the optimal fractional solution. In other words if $OPT(I)$ and $OPT_{LP}(I)$ are the values of the optimal integral solution and optimal fractional solution of an instance $I$, then the integrality ratio is defined as $\sup_I \frac{OPT(I)}{OPT_{LP}(I)}$.

An interesting open question asks for the integrality ratio of the following standard LP relaxation of the problem:
\begin{align*}
\min \sum_{e\in E(K_n)} &c_ex_e\\
\sum_{e \in \delta(v)} x_e &=2 && \text{for all~} v \in V(K_n)\backslash \{s,t\} \\
\sum_{e \in \delta(v)} x_e &=1 && \text{for all~} v \in \{s,t\}\\
\sum_{e \in \delta(X)}x_e &\geq 2 && \text{for all~} X \subseteq V(K_n)\backslash \{s,t\}\\
\sum_{e \in \delta(X)}x_e &\geq 1 && \text{for all~} \{s\}\subseteq X \subseteq V(K_n)\backslash \{t\}\\
x_e&\geq 0 && \text{for all~} e\in E(K_n).
\end{align*}
The best currently known lower bound is $\frac{3}{2}$. This value is achieved by a simple standard example. A recent series of work improves the upper bound towards the conjectured optimal value of $\frac{3}{2}$.

Hoogeveen adapted Christofides algorithm for the standard TSP \cite{christofides} (which was independently developed by Serdjukov \cite{serdjukov}) to the $s-t$ \textsc{Path TSP} \cite{hoogeveen1991analysis}. A parity correction vector is added to a minimum spanning tree to obtain a tour. This leads to an integrality ratio of $\frac{5}{3}$ for the path version. An, Kleinberg and Shmoys suggested the best of many Christofides algorithm for $s-t$ \textsc{Path TSP} \cite{an2015improving}. Instead of using the minimum spanning tree they decompose the optimal LP solution into a convex combination of spanning trees. Then, they sample the spanning tree according to the convex combination, add a parity correction vector and output the best result. With this approach the upper bound on the integrality ratio was improved to $\frac{1+\sqrt{5}}{2}$. Seb\H{o} improved and simplified this approach to obtain a ratio of $\frac{8}{5}$ \cite{sebHo2013eight}. In \cite{vygen2016reassembling} Vygen choose the convex combination in a particular way. This idea was further improved by Gottschalk and Vygen by a generalization of the Gao trees \cite{gottschalk2018better}.
For an upper bound of $\frac{3}{2}+\frac{1}{34}$, Seb\H{o} and Van Zuylen delete the so-called lonely edges of the spanning trees before adding the parity correction vector based on the underlying idea that it is likely that the parity correction vector will reconnect the tour. If this is not the case they add two copies of lonely edges to reconnect the tour afterwards \cite{sebHo2019salesman}. The analysis was improved by Traub and Vygen by choosing the weights of the spanning trees in a non-standard way. This improves the ratio to $1+\frac{1}{1+4\ln(\frac{5}{4})}$ \cite{traub2019improved}.

In the special case of the graph version of the $s-t$ \textsc{Path TSP} where the cost $c$ arises from shortest paths of a unweighted graph $G$ the integrality ratio is known to be equal to $\frac{3}{2}$ \cite{sebHo2014shorter}. 

In this paper we improve the previous best upper bound on the integrality ratio for the \textsc{Metric $s-t$ Path TSP} of $1+\frac{1}{1+4\ln(\frac{5}{4})}>1.5283$ by Traub and Vygen in \cite{traub2019improved} to 1.5273 by improving the choice of an auxiliary function $h$ in their analysis. Numerical computations indicate that this bound is close to the best possible upper bound achievable by their approach.

\section{Improved Upper Bound on the Integrality Ratio}
In this section we show the improved upper bound on the integrality ratio. For this we use a theorem from \cite{traub2019improved}:

\begin{theorem}[Theorem 5 in \cite{traub2019improved}] \label{integrable fkt}
Let $h:[0,1]\to [0,1]$ be an integrable function with
\begin{align}
\int_z^1\max\{0, h(\sigma)-1 +zh(\sigma)\}\mathrm{d}\sigma+\int^z_0(h(\sigma)-1-zh(\sigma))\mathrm{d}\sigma\leq 0 \label{cond}
\end{align}
for all $z\in [0,1]$. Then, the best-of-many Christofides algorithm with lonely edge deletion \cite{sebHo2019salesman} computes a solution of cost at most $\rho^*c(x^*)$, where
\begin{align*}
\rho^*=1+\frac{1}{1+\int_0^1h(\sigma)\mathrm{d}\sigma}.
\end{align*}
\end{theorem}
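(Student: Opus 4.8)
The plan is to analyze the \emph{best-of-many Christofides algorithm with lonely edge deletion} of Seb\H{o} and Van Zuylen, but with the tree-sampling distribution reshaped by $h$, and to show that (\ref{cond}) is exactly the condition that keeps the parity-correction step feasible in expectation. First I would write the optimal LP solution $x^*$ as a convex combination of spanning trees $x^* = \sum_S p_S \chi_S$; this is possible because the degree and cut constraints place $x^*$ in the spanning-tree polytope (its total value is $n-1$, and every cut separating $s$ from $t$ is crossed at least once, so the connectivity requirement holds). Sampling a tree $S$ and adding a minimum-cost $T$-join that repairs the parities of the internal vertices and of $s$ and $t$ yields a connected Eulerian $s$-$t$ walk that shortcuts to a Hamiltonian $s$-$t$ path; since the algorithm outputs the cheapest such path, it suffices to bound the \emph{expected} cost over a suitable distribution. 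Since the reweighting is arranged so that the expected tree cost equals $c(x^*)$, the whole task reduces to bounding the expected parity-correction cost (including any reconnection) by $\frac{1}{1+\int_0^1 h(\sigma)\mathrm{d}\sigma}\,c(x^*)$, which then gives $\rho^* = 1 + \frac{1}{1+\int_0^1 h(\sigma)\mathrm{d}\sigma}$.

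Next I would isolate the \emph{narrow cuts}: the sets $\delta(X)$ with $s \in X$, $t \notin X$ and $x^*(\delta(X))$ close to $1$ form a chain, and I would parametrize this chain by $\sigma \in [0,1]$ along the LP measure, so that $h(\sigma)$ becomes a density reshaping how much weight each position of the chain receives. For the correction I would use the feasible fractional $T$-join $\tfrac12 x^*$ augmented with extra mass on exactly the narrow cuts where the sampled tree crosses an \emph{even} (hence wrong-parity) number of times; there $\tfrac12 x^*(\delta(X)) \approx \tfrac12 < 1$, so some make-up is forced. The lonely-edge idea enters precisely here: deleting the rare single tree-edges of a narrow cut before correcting makes it likely that the $T$-join already reconnects the two sides, and in the remaining bad case one pays only two copies of a cheap lonely edge. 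The reshaping by $h$ controls the probability that the cut at position $\sigma$ is even, and thereby how much make-up and reconnection cost that position incurs.

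The core of the argument is then a position-by-position accounting along the chain. Fixing the cut at parameter value $z$, the expected extra cost charged to cuts lying after $z$ is captured by $\int_z^1 \max\{0, (1+z)h(\sigma) - 1\}\mathrm{d}\sigma$ — the make-up/reconnection term, in which the factor $1+z$ comes from combining the LP value with the correction weight and the truncation at $0$ reflects that one pays only when the required make-up exceeds what is already present — while the savings harvested from the cuts lying before $z$ contribute $\int_0^z ((1-z)h(\sigma) - 1)\mathrm{d}\sigma$. Condition (\ref{cond}) asserts that for every threshold $z$ the savings dominate the extra cost, which is exactly the statement that the reshaped correction vector is a feasible fractional $T$-join whose excess over the base integrates to a nonpositive quantity; feeding this uniform feasibility into the minimum $T$-join duality bound yields the claimed correction cost and hence $\rho^*$.

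I expect the main obstacle to be the bookkeeping that links the continuous parametrization to the discrete, random behaviour of the sampled tree on the chain of narrow cuts: one must integrate the even-crossing probabilities against the density $h$, and verify that the reconnection charge incurred when lonely-edge deletion fails is correctly bounded by the two-copy term, \emph{uniformly} over all thresholds $z$. Making the factors $1+z$ and $1-z$ emerge with the right signs and confirming that the worst case over $z$ is captured by (\ref{cond}) rather than by some strictly stronger inequality is where the delicate part of the estimate lives; the rest is the standard best-of-many Christofides accounting combined with the $T$-join duality bound.
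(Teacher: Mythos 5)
A point of order first: this paper does not prove the statement at all. It is imported verbatim as Theorem~5 of Traub and Vygen \cite{traub2019improved} and used as a black box; the paper's only original content is the construction and verification of a better function $h$. So your attempt cannot be compared with a proof in this paper, only with the original argument in \cite{traub2019improved}. Measured against that argument, your sketch assembles the right cast of characters (best-of-many Christofides with lonely edge deletion \cite{sebHo2019salesman}, the chain of narrow cuts, a fractional point in the $T$-join dominant, the factors $1+z$ and $1-z$ coming from the LP value of a narrow cut), but it goes wrong exactly at the step that carries the theorem.

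Concretely: (i) You decompose $x^*$ into an \emph{arbitrary} convex combination of spanning trees. The proof needs the structured decomposition of Gottschalk and Vygen \cite{gottschalk2018better}: parametrizing the trees by $\sigma\in[0,1]$, for every narrow cut $C$ the trees containing exactly one edge of $C$ (the ``lonely'' trees for $C$) form a prefix of the parametrization of measure $2-x^*(C)$, simultaneously for all narrow cuts. Without this chain-compatible structure, splitting the two integrals in (\ref{cond}) at the point $z$ has no meaning and the lonely-edge accounting cannot be set up. (ii) You misread the two variables in (\ref{cond}): in your core paragraph both $z$ and $\sigma$ range over positions of narrow cuts, and you interpret the condition as ``savings from the cuts before $z$ dominate the charges to the cuts after $z$.'' In the actual proof, $z$ stands for one \emph{fixed} narrow cut $C$ (determined by its LP value, i.e., by the measure of trees in which $C$ is lonely), while $\sigma$ ranges over the \emph{trees} of the decomposition; the condition says, for each narrow cut separately, that the mass freed by deleting the lonely edges of that cut in the trees where it is lonely (one integral) pays for the make-up mass the \emph{same} cut needs in the trees where it has even intersection (the other integral, truncated at $0$ because surplus in one tree cannot be harvested). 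Your reading would transfer slack between different $T$-cut constraints, which the $T$-join dominant does not permit: $y(\delta(X))\geq 1$ must hold cut by cut, not merely on average over cuts. (iii) Your setup is internally inconsistent: you reshape the tree-sampling distribution by $h$ and simultaneously assert that the expected tree cost equals $c(x^*)$. In \cite{traub2019improved} the algorithm and its distribution are unchanged from Seb\H{o}--Van Zuylen; $h$ enters only the analysis, as the non-uniform weights with which tree edges are mixed into the parity-correction vector, roughly $\frac{1}{1+\int_0^1 h}\bigl(x^*+\int_0^1 h(\sigma)\chi^{S_\sigma}\,\mathrm{d}\sigma\bigr)$ plus lonely-edge corrections, and it is this mixture that produces $\rho^*=1+\frac{1}{1+\int_0^1 h(\sigma)\mathrm{d}\sigma}$. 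These are not bookkeeping details to be deferred, as your last paragraph suggests; they are the content of the theorem.
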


Traub and Vygen applied Theorem \ref{integrable fkt} for $h(\sigma):=\frac{4}{4+\sigma}$. We define our choice of $h$ as follows:

\begin{definition}
$h$ is a step function taking the value $\alpha:=0.971239$ in $[0,x)$ and the value $\beta:=0.873362$ in $[x,1]$ where $x:=0.236901$ (Figure \ref{choice of h}), i.e.\
\begin{align*}
h(\sigma)=  \begin{cases*}
      0.971239 & if $\sigma\in[0,0.236901)$ \\
      0.873362        & otherwise.
    \end{cases*}
\end{align*}
\end{definition}

\begin{figure}[!htb]
\centering
 \includegraphics[scale=0.5]{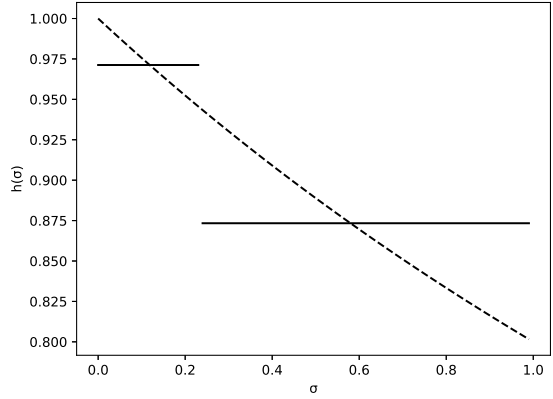}
  \caption{The straight function shows our choice of $h$, the dashed function that chosen by Traub and Vygen in \cite{traub2019improved}.}
  \label{choice of h}
\end{figure}

In order to apply Theorem \ref{integrable fkt} we need to show that the condition (\ref{cond}) is satisfied.

\begin{lemma} \label{condition}
Our choice of $h$ satisfies the condition of Theorem \ref{integrable fkt}.
\end{lemma}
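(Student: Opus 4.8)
The plan is to exploit that $h$ is piecewise constant, so the left-hand side of (\ref{cond}), which I denote $F(z)$, is a piecewise quadratic function of $z$; I will then verify $F(z)\le 0$ on each piece by elementary quadratic analysis. First I would rewrite the two integrands as $(1+z)h(\sigma)-1$ and $(1-z)h(\sigma)-1$ respectively. Since $h$ takes only the values $\alpha$ and $\beta$, the quantity $(1+z)h(\sigma)-1$ inside the $\max$ changes sign at $z_\alpha:=\tfrac{1-\alpha}{\alpha}$ when $h(\sigma)=\alpha$ and at $z_\beta:=\tfrac{1-\beta}{\beta}$ when $h(\sigma)=\beta$, while the partition point $x$ decides which value $h$ takes. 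A direct computation with the given constants confirms the ordering $z_\alpha<z_\beta<x$, so these breakpoints partition $[0,1]$ into the four subintervals $[0,z_\alpha]$, $(z_\alpha,z_\beta]$, $(z_\beta,x)$ and $[x,1]$.

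On each subinterval both integrals evaluate in closed form, because each integrand is constant in $\sigma$ on each of the two pieces $[0,x)$ and $[x,1]$. When $z<x$ the lower limit $z$ lies in $[0,x)$, and when $z\ge x$ the inner partition at $x$ splits the second integral; in either case the integrals collapse to sums of products of the interval lengths with the (now constant) integrand values, so $F$ becomes an explicit polynomial of degree at most two in $z$. For example, on $[0,z_\alpha]$ both maxima vanish, the second integral gives $F(z)=(\alpha-1)z-\alpha z^2$, and since $F(0)=0$, $F'(0)=\alpha-1<0$, and the parabola opens downward, one reads off $F\le 0$ immediately. I would write out the analogous quadratic on each of the other three intervals and certify nonpositivity by examining its leading coefficient, its two endpoint values, and, where the leading coefficient is positive, the location of its vertex relative to the subinterval.

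The main obstacle is that the constants $\alpha,\beta,x$ are chosen near-optimally, so several of these quadratic pieces are expected to be \emph{tight}: $F$ touches $0$ rather than remaining strictly below it. This is already visible at $z=0$, where $F(0)=0$, and by the near-optimality of the choice I would anticipate additional active points (plausibly near $z=x$ and at $z=1$, where a short check gives $F(1)=-1$). Consequently the interior verification demands care, since a quadratic that merely has nonpositive endpoints may still bulge above zero in between. The saving structure is that $F$ is continuous in $z$ — the integrands are continuous in $z$ and the limits vary continuously — so the closed-form pieces must agree at the breakpoints; I would use this to pin down where $F$ attains its maximum on each subinterval and rule out any upward excursion. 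Carrying this out reduces the lemma to checking the sign of $F$ at the four breakpoints together with the position of each vertex, a finite, routine, but numerically delicate computation with the five-digit constants.
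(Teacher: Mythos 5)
Your setup coincides exactly with the paper's proof: the same breakpoints $\frac{1}{\alpha}-1<\frac{1}{\beta}-1<x$, the same four-case partition of $[0,1]$, and the same closed-form quadratics on each piece (your first-interval computation $F(z)=(\alpha-1)z-\alpha z^2\le 0$ matches the paper's trivial case). Where the proposal breaks down is the final certification. All three nontrivial quadratics have \emph{negative} leading coefficient ($-2\alpha$ or $-2\beta$), so your clause about inspecting vertices ``where the leading coefficient is positive'' is vacuous, and your concluding checklist --- the sign of $F$ at the four breakpoints together with the \emph{position} of each vertex --- is not a sufficient criterion: for a concave parabola whose vertex lies inside the subinterval, nonpositivity is governed by the vertex \emph{value}. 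And here all three vertices are interior (the second quadratic peaks at $z=x/4\approx 0.059$, the third near $0.231$, the fourth near $0.243$), so on every nontrivial piece the endpoint-plus-location check would fail to decide the lemma. The correct reduction, which the paper carries out, is that a downward parabola is nonpositive if and only if its discriminant is nonpositive; this is precisely the vertex-value condition, stated in (\ref{eq1})--(\ref{eq3}).

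Moreover, that discriminant verification is the actual content of the lemma and cannot be deferred as ``routine'': the constants $\alpha,\beta,x$ were chosen so that the discriminants nearly vanish (the paper computes them to be about $-1.2\cdot 10^{-7}$, $-3.5\cdot 10^{-6}$ and $-3.0\cdot 10^{-6}$), so the sign genuinely hinges on the fifth and sixth digits of the constants --- exactly the tightness you anticipated, but your proposal ends before performing the decisive computation. The appeal to continuity of $F$ to ``pin down where $F$ attains its maximum'' also buys nothing: the maxima are located by completing the square on each explicit quadratic, not by matching pieces at breakpoints. Note finally that the paper's discriminant route sidesteps your interval-by-interval bookkeeping entirely: since each parabola opens downward and has nonpositive discriminant, it is nonpositive for \emph{all real} $z$, so neither the endpoint values nor the ordering of vertices relative to the subintervals ever needs to be examined.
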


\begin{proof}
Since by definition $h(\sigma)>0$ for all $\sigma\in[0,1]$ we have $h(\sigma)-1 +zh(\sigma)<0$ if and only if $z< \frac{1}{h(\sigma)}-1$. For our choice of $h$ we have that $h(\sigma)$ can only take two values: $\alpha$ and $\beta$. Note that $0<\frac{1}{\alpha}-1 <0.03$ and $0.145<\frac{1}{\beta}-1 <0.146$. Thus, we can distinguish four cases: $z\in [0,\frac{1}{\alpha}-1), z\in [\frac{1}{\alpha}-1,\frac{1}{\beta}-1), z\in [\frac{1}{\beta}-1,x)$ and $z\in [x,1]$. 

The first case is trivial, since for $z\in[0,\frac{1}{\alpha}-1)$ we have:
\begin{align*}
&\int_z^1\max\{0, h(\sigma)-1 +zh(\sigma)\}\mathrm{d}\sigma+\int^z_0(h(\sigma)-1-zh(\sigma))\mathrm{d}\sigma\\
=&\int^z_0((1-z)h(\sigma)-1)\mathrm{d}\sigma \leq \int^z_0 0 \mathrm{d}\sigma=0.
\end{align*}

For $z\in [\frac{1}{\alpha}-1,\frac{1}{\beta}-1)$ we have:
\begin{align*}
&\int_z^1\max\{0, h(\sigma)-1 +zh(\sigma)\}\mathrm{d}\sigma+\int^z_0(h(\sigma)-1-zh(\sigma))\mathrm{d}\sigma\\
=&\int_z^{x} (\alpha-1 +z\alpha) \mathrm{d}\sigma+z(\alpha-1-z\alpha)=(x-z) (\alpha-1+z\alpha)+z(\alpha-1-z\alpha)\\
=&x\alpha-x+x\alpha z-2\alpha z^2.
\end{align*}
Similarly, for $z\in [\frac{1}{\beta}-1,x)$ we have:
\begin{align*}
&\int_z^1\max\{0, h(\sigma)-1 +zh(\sigma)\}\mathrm{d}\sigma+\int^z_0(h(\sigma)-1-zh(\sigma))\mathrm{d}\sigma\\
=&\int_z^{x} (\alpha-1 +z\alpha) \mathrm{d}\sigma+\int_x^{1} (\beta-1 +z\beta) \mathrm{d}\sigma+z(\alpha-1-z\alpha)\\
=&(x-z) (\alpha-1+z\alpha)+(1-x)(\beta-1+z\beta)+z(\alpha-1-z\alpha)\\
=&x\alpha-1+(1-x)\beta+(x\alpha+(1-x)\beta)z-2\alpha z^2.
\end{align*}
Finally, for $z\in [x,1]$ we have:
\begin{align*}
&\int_z^1\max\{0, h(\sigma)-1 +zh(\sigma)\}\mathrm{d}\sigma+\int^z_0(h(\sigma)-1-zh(\sigma))\mathrm{d}\sigma\\
=&\int_z^{1} (\beta-1 +z\beta) \mathrm{d}\sigma+\int^x_0(\alpha-1-z\alpha)\mathrm{d}\sigma+\int^z_x(\beta-1-z\beta)\mathrm{d}\sigma\\
=&(1-z)  (\beta-1+z\beta)+x(\alpha-1-z\alpha)+(z-x)(\beta-1-z\beta)\\
=&x\alpha-1+(1-x)\beta+(-x\alpha+(x+1)\beta)z-2\beta z^2.
\end{align*}
Hence, it is enough to show that for all $z\in \mathbb{R}$ we have:
\begin{align*}
x\alpha-x+x\alpha z-2\alpha z^2&\leq 0\\
x\alpha-1+(1-x)\beta+(x\alpha+(1-x)\beta)z-2\alpha z^2&\leq 0\\
x\alpha-1+(1-x)\beta+(-x\alpha+(x+1)\beta)z-2\beta z^2&\leq 0.
\end{align*}
The left hand sides are quadratic functions in $z$. Note that the leading coefficient is negative in all three cases. Hence, the inequalities hold if and only if the discriminants of the three quadratic functions are non-positive, that is:
\begin{align}
(x\alpha)^2+8\alpha(x\alpha-x)&\leq 0 \label{eq1}\\
(x\alpha+(1-x)\beta)^2+8\alpha(x\alpha-1+(1-x)\beta) &\leq 0 \label{eq2} \\
(-x\alpha+(x+1)\beta)^2+8\beta(x\alpha-1+(1-x)\beta)&\leq 0 \label{eq3}.
\end{align}
We can check these inequalities for our choice of $x, \alpha, \beta$:
\begin{align*}
(x\alpha)^2+8\alpha(x\alpha-x)<-1.17266 \cdot 10^{-7} &< 0\\
(x\alpha+(1-x)\beta)^2+8\alpha(x\alpha-1+(1-x)\beta)<-3.5346\cdot 10^{-6} &< 0\\
(-x\alpha+(x+1)\beta)^2+8\beta(x\alpha-1+(1-x)\beta)<-3.00596\cdot 10^{-6} &< 0.
\end{align*}
\end{proof}

\begin{theorem}
The integrality ratio of the standard LP relaxation for the \textsc{Metric $s-t$ Path TSP} is at most 1.5273.
\end{theorem}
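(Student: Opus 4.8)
The plan is to combine the two results already established in the excerpt. Theorem~\ref{integrable fkt} from \cite{traub2019improved} guarantees that, for any integrable $h:[0,1]\to[0,1]$ satisfying condition~(\ref{cond}), the best-of-many Christofides algorithm with lonely edge deletion produces a solution of cost at most $\rho^*c(x^*)$, where $\rho^*=1+\frac{1}{1+\int_0^1 h(\sigma)\,\mathrm{d}\sigma}$ and $x^*$ is an optimal LP solution. Since the algorithm's output is an integral solution, its cost is an upper bound on $OPT(I)$, while $c(x^*)=OPT_{LP}(I)$. Hence the very existence of a valid $h$ yields $OPT(I)\le \rho^*\,OPT_{LP}(I)$ for every instance $I$, so the integrality ratio is bounded above by $\rho^*$. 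The strategy is therefore to instantiate $h$ with the step function from the Definition, verify it is admissible, and evaluate $\rho^*$.

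The first step is to invoke Lemma~\ref{condition}, which already establishes that our chosen step function $h$ satisfies condition~(\ref{cond}); this is the substantive prerequisite and it has been discharged. The remaining work is purely a computation of $\int_0^1 h(\sigma)\,\mathrm{d}\sigma$. Because $h$ is a step function taking the value $\alpha=0.971239$ on $[0,x)$ and $\beta=0.873362$ on $[x,1]$ with $x=0.236901$, the integral splits as
\begin{align*}
\int_0^1 h(\sigma)\,\mathrm{d}\sigma = x\alpha+(1-x)\beta.
\end{align*}
I would substitute the numerical values to obtain this quantity, then plug it into the formula for $\rho^*$.

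The final step is to evaluate $\rho^*=1+\frac{1}{1+x\alpha+(1-x)\beta}$ and check that it does not exceed $1.5273$. Numerically $x\alpha+(1-x)\beta\approx 0.89655$, giving $\rho^*\approx 1+\frac{1}{1.89655}\approx 1.52729$, which rounds up to at most $1.5273$. Since this bound holds simultaneously for every instance $I$, taking the supremum over $I$ in the definition $\sup_I \frac{OPT(I)}{OPT_{LP}(I)}$ yields the claimed bound on the integrality ratio.

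I do not anticipate a genuine obstacle here, since the analytic heart of the argument lives in Theorem~\ref{integrable fkt} and Lemma~\ref{condition}, both of which are available. The only point requiring care is ensuring the arithmetic rounds in the correct direction: one must confirm that $\rho^*$ is \emph{at most} $1.5273$ rather than merely close to it, so I would keep enough significant digits when computing $x\alpha+(1-x)\beta$ to certify the inequality strictly.
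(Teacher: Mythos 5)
Your proposal is correct and follows essentially the same route as the paper: invoke Lemma~\ref{condition} to certify that the step function $h$ satisfies condition~(\ref{cond}), apply Theorem~\ref{integrable fkt}, and compute $\rho^*=1+\frac{1}{1+x\alpha+(1-x)\beta}\approx 1.52727<1.5273$. Your added remark that the algorithmic guarantee against $c(x^*)$ bounds the integrality ratio instance-by-instance is a correct (and slightly more explicit) justification of what the paper leaves implicit, and your value $\approx 1.52729$ is a harmless rounding slip that does not affect the bound.
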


\begin{proof}
By Lemma \ref{condition}, our choice of $h$ satisfies the condition of Theorem \ref{integrable fkt}. Hence, we can apply Theorem \ref{integrable fkt} to get an upper bound on the integrality ratio of
\begin{align*}
\rho^*:=1+\frac{1}{1+\int_0^1h(\sigma)\mathrm{d}\sigma}=1+\frac{1}{1+x\alpha+(1-x)\beta}<1.5273.
\end{align*}
\end{proof}

\begin{remark}
The values for $\alpha, \beta$ and $x$ we chose are approximate values of a solution for the system of equations we get by replacing the less-than-or-equal sign in (\ref{eq1}), (\ref{eq2}) and (\ref{eq3}) by an equal sign. More precise values would probably lead to a better upper bound. Using a computer algebra system, we can solve that system of equations to get the exact values:

\begin{align*}
\alpha&:=\frac{1}{48}\left(34+\frac{73}{\sqrt[3]{-377+18i\sqrt{762}}}+\sqrt[3]{-377+18i\sqrt{762}} \right)\\
\beta&:=\frac{2}{3}\left(-45+172\alpha-128\alpha^2\right)\\
x&:=8(\frac{1}{\alpha}-1),
\end{align*}
where the roots are principal roots and $i$ is the imaginary unit. By definition, it is clear that the inequalities (\ref{eq1}), (\ref{eq2}) and (\ref{eq3}) are fulfilled by this choice of values. This would lead to an upper bound on the integrality ratio of
\scriptsize
\begin{align*}
\frac{-30 (377 i + 18 \sqrt{762}) + (-377 + 18 i \sqrt{762})^{\frac{2}{3}} (-249 i + 28 \sqrt{762}) + (-377 + 18 i \sqrt{762})^{\frac{1}{3}} (-3975 i + 206 \sqrt{762})}{4((-377+18i\sqrt{762})^{\frac{2}{3}} (-44 i + 7 \sqrt{762}) - 16 (377 i + 18 \sqrt{762}) + (-377 + 18 i \sqrt{762})^{\frac{1}{3}} (-1088 i + 47 \sqrt{762}))}<1.5273.
\end{align*}
\end{remark}

\begin{remark}
As already pointed out in \cite{traub2019improved} numerical computations indicate that the best choice of $h$ gives an upper bound on the integrality ratio of approximately 1.5273. Hence, this suggests that our choice of $h$ is near-optimal.
\end{remark}


\begin{thebibliography}{10}

\bibitem{an2015improving}
Hyung-Chan An, Robert Kleinberg, and David~B Shmoys.
\newblock Improving {C}hristofides' algorithm for the s-t path {TSP}.
\newblock {\em Journal of the ACM (JACM)}, 62(5):1--28, 2015.

\bibitem{christofides}
Nicos Christofides.
\newblock Worst-case analysis of a new heuristic for the travelling salesman
  problem.
\newblock Technical report, Carnegie-Mellon Univ Pittsburgh Pa Management
  Sciences Research Group, 1976.

\bibitem{gottschalk2018better}
Corinna Gottschalk and Jens Vygen.
\newblock Better s--t-tours by {G}ao trees.
\newblock {\em Mathematical Programming}, 172(1-2):191--207, 2018.

\bibitem{hoogeveen1991analysis}
JA~Hoogeveen.
\newblock Analysis of {C}hristofides' heuristic: Some paths are more difficult
  than cycles.
\newblock {\em Operations Research Letters}, 10(5):291--295, 1991.

\bibitem{sebHo2013eight}
Andr{\'a}s Seb{\H{o}}.
\newblock Eight-fifth approximation for the path {TSP}.
\newblock In {\em International Conference on Integer Programming and
  Combinatorial Optimization}, pages 362--374. Springer, 2013.

\bibitem{sebHo2014shorter}
Andr{\'a}s Seb{\H{o}} and Jens Vygen.
\newblock Shorter tours by nicer ears: 7/5-approximation for the graph-{TSP},
  3/2 for the path version, and 4/3 for two-edge-connected subgraphs.
\newblock {\em Combinatorica}, pages 1--34, 2014.

\bibitem{sebHo2019salesman}
Andr{\'a}s Seb{\H{o}} and Anke~Van Zuylen.
\newblock The salesman’s improved paths through forests.
\newblock {\em Journal of the ACM (JACM)}, 66(4):1--16, 2019.

\bibitem{serdjukov}
A.~I. Serdjukov.
\newblock Some extremal bypasses in graphs [in {R}ussian].
\newblock {\em Upravlyaemye Sistemy}, 17:76--79, 1978.

\bibitem{traub2018approaching}
Vera Traub and Jens Vygen.
\newblock Approaching 3/2 for the s-t-path {TSP}.
\newblock In {\em Proceedings of the Twenty-Ninth Annual ACM-SIAM Symposium on
  Discrete Algorithms}, pages 1854--1864. SIAM, 2018.

\bibitem{traub2019improved}
Vera Traub and Jens Vygen.
\newblock An improved upper bound on the integrality ratio for the s--t-path
  {TSP}.
\newblock {\em Operations Research Letters}, 47(3):225--228, 2019.

\bibitem{TraubVZ20}
Vera Traub, Jens Vygen, and Rico Zenklusen.
\newblock Reducing path {TSP} to {TSP}.
\newblock In {\em Proccedings of the 52nd Annual {ACM}
  {SIGACT} Symposium on Theory of Computing, {STOC} 2020, Chicago, IL, USA,
  June 22-26, 2020}, pages 14--27. {ACM}, 2020.

\bibitem{vygen2016reassembling}
Jens Vygen.
\newblock Reassembling trees for the traveling salesman.
\newblock {\em SIAM Journal on Discrete Mathematics}, 30(2):875--894, 2016.

\bibitem{zenklusen20191}
Rico Zenklusen.
\newblock A 1.5-approximation for path {TSP}.
\newblock In {\em Proceedings of the Thirtieth Annual ACM-SIAM Symposium on
  Discrete Algorithms}, pages 1539--1549. SIAM, 2019.

\end{thebibliography}
\end{document}